\documentclass[letterpaper, 10 pt, conference]{ieeeconf}
\IEEEoverridecommandlockouts
\usepackage{fix-cm}
\usepackage[utf8]{inputenc}
\usepackage{cite}
\usepackage{caption}
\usepackage{subcaption}
\usepackage{amsmath,amssymb,amsfonts}
\usepackage{algorithmic}
\usepackage{algorithm} 
\usepackage{algorithmic}
\usepackage{booktabs}
\usepackage{graphicx}
\usepackage{subcaption}
\usepackage[textsize=footnotesize]{todonotes}
\newtheorem{theorem}{Theorem}
\newtheorem{remark}{Remark}
\newtheorem{corollary}{Corollary}

\newtheorem{definition}{Definition}
\newtheorem{lemma}{Lemma}

\newtheorem{example}{\indent Example}

\def\and{\textrm{ and}}

\newcommand{\Sj}[1]{S^{(#1)}(\prtition_#1,\alpha_{#1})}
\newcommand{\hj}[1]{h^{(#1)}(x,\prtition_#1,\alpha_{#1})}

\newcommand{\Domain}{\mathcal{D}}

\newcommand{\PWA}{\mathrm{PWA}}

\newcommand{\Vertex}{\mathcal{F}_0}

\newcommand{\prtition}{\mathcal{P}}

\newcommand{\Int}[1]{\mathrm{Int}\left(#1 \right)}

\usepackage{hyperref}
\hypersetup{
colorlinks=true,
linkbordercolor=orange,
citebordercolor=blue,
citecolor=blue,
filecolor=magenta,
urlcolor=magenta,
urlbordercolor=blue
}

\author{Pouya Samanipour and Hasan A. Poonawala
\thanks{Pouya Samanipour and Hasan A. Poonawala are with the Department of Mechanical and Aerospace Engineering, University of Kentucky, Lexington, USA
{\tt\small\{samanipour.pouya,hasan.poonawala\}@uky.edu}. The corresponding author is Hasan A.Poonawala. 
This work was supported by the National Science Foundation under Grant 2330794.}}
\begin{document}

\title{ReLU Barrier Functions for Nonlinear Systems with Constrained Control: A Union of Invariant Sets Approach}
\maketitle

\begin{abstract}
Certifying safety for nonlinear systems with polytopic input constraints is challenging because CBF synthesis must ensure control admissibility under saturation. 
We propose an approximation--verification pipeline that performs convex barrier synthesis on piecewise-affine (PWA) surrogates and certifies safety for the original nonlinear system via facet-wise verification. To reduce conservatism while preserving tractability, we use a two-slope Leaky ReLU surrogate for the extended class-$\mathcal{K}$ function $\alpha(\cdot)$ and combine multiple certificates using a \emph{Union of Invariant Sets (UIS)}.
Counterexamples are handled through local uncertainty updates.
Simulations on pendulum and cart-pole systems with input saturation show larger certified invariant sets than linear-$\alpha$ designs with tractable computation time.
\end{abstract}

\section{Introduction}
Safety is paramount in controlling dynamical systems, especially in robotics and autonomous driving where failures can be catastrophic. Formal safety guarantees typically certify that trajectories remain inside a safe region despite input constraints~\cite{hedesh2025delay}.

Barrier functions (BFs) and control barrier functions (CBFs) provide such certificates~\cite{ames2014control,ames2019control}: a CBF induces an invariant set by enforcing that, on its boundary, some admissible control keeps trajectories inside. Two practical obstacles are: (i) input constraints may eliminate feasible safety-preserving controls, and (ii) many methods assume prior knowledge of a safe/invariant set. When $h(x)$ must be synthesized, the choice of extended class-$\mathcal{K}$ function $\alpha(\cdot)$ further trades tractability for conservatism, with linear choices enabling convex programs but often shrinking the certified set.

The rise of machine learning~\cite{11152994} in control has enabled neural CBF approaches for safety-critical systems. However, these methods typically rely on nonconvex training with expensive verification~\cite{chang2019neural,dawson2022safe,zhang2023exact,zhao2021synthesizing}. SEEV~\cite{zhang2024seev} improves verification efficiency, but still depends on nonconvex training and does not provide convex synthesis under input saturation or a systematic mechanism to enlarge invariant sets. Data-driven Forward Invariant (FI) set estimation methods~\cite{11108033} can reduce modeling assumptions, but likewise do not yield convex synthesis under saturation. Soft-minimum barriers~\cite{10156245} improve feasibility under constraints by smoothing backup trajectories, yet require a known safe set and restrict $\alpha(\cdot)$, which can be conservative.

In our prior work~\cite{samanipour2024invariant}, we developed a convex framework for FI set construction for PWA systems with saturated inputs, but it (i) restricted $\alpha(\cdot)$ to be linear and (ii) applied only to PWA dynamics.

This paper synthesizes CBFs for nonlinear systems under input constraints while reducing conservatism without sacrificing convex synthesis. Leveraging the equivalence between ReLU networks and continuous PWA functions, we perform LP-based barrier synthesis on surrogate PWA models, then transfer guarantees to the true nonlinear system via facet-wise verification with local uncertainty updates. To systematically enlarge certified regions, we introduce a \emph{Union of Invariant Sets (UIS)} that combines multiple certificates through a max-composition (geometric intuition in Fig.~\ref{fig:UIS process}).

\noindent\textbf{Contributions.}
\begin{enumerate}
    \item \textbf{Leaky-ReLU surrogate for $\alpha(\cdot)$:}
    A two-slope design that avoids bilinear nonconvexity while reducing conservatism relative to linear $\alpha$.
    \item \textbf{Union of Invariant Sets (UIS):}
    A max-composition that certifies the union of multiple LP-based invariant sets (Fig.~\ref{fig:UIS process}), yielding a region never smaller than the best single-$\alpha$ certificate, with no extra optimization.
    \item \textbf{Approximation--verification pipeline:}
    Convex synthesis on surrogate PWA models combined with facet-wise verification and local uncertainty updates to ensure safety for the original nonlinear system under input constraints.
\end{enumerate}

Overall, the method preserves convex synthesis, reduces conservatism, and extends certificates from tractable surrogates to constrained nonlinear dynamics.

\begin{figure}
    \centering
    \includegraphics[width=0.65\linewidth]{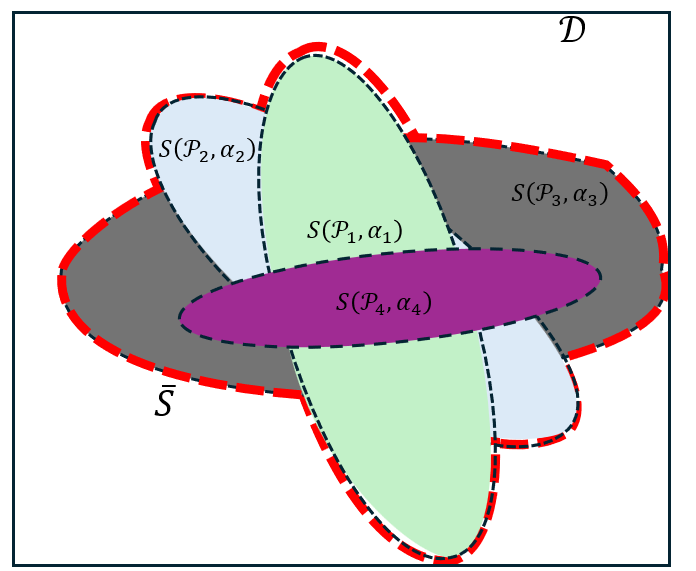}
    \caption{
Geometric interpretation of the Union of Invariant Sets (UIS). Each solid curve shows an invariant set obtained with a different linear $\alpha(x)$. Instead of selecting one candidate set, UIS certifies the union (red dashed line) via max-composition of the corresponding barrier functions. This typically yields a larger certified safe region and never smaller than the best single-$\alpha$ choice, without incurring additional optimization cost.}
\label{fig:UIS process}
\end{figure}


\section{Preliminaries}
We introduce the notation and concepts used in this paper.

\paragraph*{Notation.} 
For a set $S$, $\mathrm{conv}(S)$, $\mathrm{Int}(S)$, and $\partial S$ denote its convex hull, interior, and boundary, respectively. For a matrix $A$, $A^\top$ denotes its transpose.  

\subsection{Piecewise Affine Functions}
\label{sec:pwafun}
A piecewise affine (PWA) function is defined on a partition $\mathcal{P}=\{X_i\}_{i \in I(\mathcal{P})}$ of a domain $\Domain \subset \mathbb{R}^n$ by affine components $(A_i,a_i)$:
\begin{equation}\label{eq:definePWAfun}
f(x) = A_i x + a_i, \quad x \in X_i,
\end{equation}
where each cell $X_i$ is a bounded polytope of the form
\begin{equation}\label{eq:Hrep}
X_i = \{\,x \in \mathbb{R}^n \mid E_i x + e_i \succeq 0\,\}.
\end{equation}
Specifically, each cell $X_i$ in the partition can be represented as the convex hull of its set of vertices, $\Vertex(X_i)$, as follows:
\begin{equation}
X_i = \mathrm{conv}\{\Vertex(X_i)\}.
\end{equation}

\subsection{Forward Invariance and Barrier Functions}
We consider nonlinear control-affine dynamics
\begin{equation}\label{eq:system}
\dot{x} = f(x) + g(x)u, 
\quad x \in \Domain \subset \mathbb{R}^n,\;\; u \in \mathcal{U} \subset \mathbb{R}^m,
\end{equation}
where $f$ and $g$ are locally Lipschitz, and $\mathcal{U}=\{u:A_uu \leq c_u\}$ is a compact polytopic input set.

\begin{definition}[Forward invariance~\cite{ames2019control}]
A set $\mathcal{S}\subseteq \Domain$ is FI for~\eqref{eq:system} if $x(0)\in\mathcal{S}$ implies $x(t)\in\mathcal{S}$ for all $t\geq 0$.
\end{definition}

\begin{definition}[Control barrier function (CBF)~\cite{ames2019control}]
A continuously differentiable function $h:\Domain\to\mathbb{R}$ is a CBF for~\eqref{eq:system} if 
\begin{equation}\label{eq:safeset}
    \mathcal{S} = \{x \in \Domain \mid h(x) \geq 0\}
\end{equation}
is FI. A sufficient condition is the existence of an extended class-$\mathcal{K}_\infty$ function $\alpha$ such that
\begin{equation}\label{eq:CBF}
\inf_{u \in \mathcal{U}} \big[L_f h(x) + L_g h(x) u\big] 
\;\geq\; -\alpha(h(x)), \quad \forall x \in \Domain,
\end{equation}
where $L_f h$ and $L_g h$ denote Lie derivatives of $h$ along $f$ and $g$.
\end{definition}

\begin{definition}[Barrier function for closed-loop dynamics]
Given a feedback law $u=\kappa(x)$, the closed-loop system
\begin{equation}\label{eq:cl-system}
\dot{x} = f_{\mathrm{cl}}(x) := f(x) + g(x)\kappa(x)
\end{equation}
admits a barrier function $h(x)$ if the set $\mathcal{S}$ in~\eqref{eq:safeset} is FI whenever
\begin{equation}\label{eq:BF}
L_{f_{\mathrm{cl}}}h(x) \;\geq\; -\alpha(h(x)), \quad \forall x \in \Domain.
\end{equation}
\end{definition}

\section{Problem Description}
We aim to synthesize a CBF for the nonlinear control-affine dynamics~\eqref{eq:system}, subject to polytopic input constraints. The goal is to certify a forward-invariant set $\mathcal{S}$ defined in~\eqref{eq:safeset} that guarantees safety under all admissible controls.  

Obtaining such a CBF is nontrivial: for nonlinear dynamics, synthesizing a valid $h(x)$ under input constraints generally leads to a nonconvex optimization problem, which is computationally intractable in practice.  

To address this, we develop an approximation-verification pipeline that maintains \emph{convex optimization problems} while ensuring rigorous certification. The pipeline proceeds in five steps:  
\begin{enumerate}
    \item \textbf{Approximation:} Represent the closed-loop nonlinear dynamics with a piecewise affine (PWA) surrogate obtained from a ReLU neural network.  
    \item \textbf{Synthesis:} Construct a PWA barrier function for the surrogate dynamics via linear programming.
    \item \textbf{Enlargement:} Reduce conservatism by typically enlarging the certified invariant set using the Union of Invariant Sets (UIS) method, facilitated by a Leaky ReLU design for $\alpha(\cdot)$.
    \item \textbf{Verification:} Certify that the synthesized PWA barrier function satisfies the CBF condition~\eqref{eq:CBF} for the original nonlinear dynamics with polytopic input constraints, using a facet-wise check based on Farkas’ lemma.  
    \item \textbf{Local uncertainty update:} If verification fails, rather than relearning the nonlinear dynamics, locally update the surrogate model’s uncertainty and re-solve the convex program to obtain an adjusted barrier function.  
\end{enumerate}

This formulation bridges ReLU-based approximations with barrier-function design, provides a convex route to CBF synthesis for nonlinear systems, and ensures that failed certificates can be corrected through local surrogate uncertainty updates.

\section{\texorpdfstring{$\PWA$}{PWA} Barrier Functions for Surrogate Dynamics}
Direct synthesis of a CBF or certification of a FI set for the nonlinear dynamics~\eqref{eq:system} under polytopic input constraints is generally intractable. To address this, we fix a reference feedback law $u=\kappa(x)$ and approximate the resulting closed-loop dynamics~\eqref{eq:cl-system} on a bounded domain $\Domain$ by a piecewise affine (PWA) surrogate
\begin{equation}\label{eq:pwa dynamic}
    \hat f_{cl}(x) = \PWA(x), \quad x \in \Domain.
\end{equation}
Our primary objective in this section is to present an automated search algorithm for obtaining the certified invariant set of the surrogate PWA approximation. We achieve this through the optimization framework detailed below.

We build on the optimization framework from~\cite{samanipour2024invariant}, which certifies FI sets for PWA dynamics of the form~\eqref{eq:pwa dynamic}. Since this construction depends on the partition $\mathcal{P}$, we introduce the relevant index sets. The cells that intersect the domain boundary are
\begin{equation}
I_{\partial \Domain}(\mathcal{P}) = \{ i \in I(\mathcal{P}) \colon X_i \cap \partial \Domain \neq \emptyset \}.
\end{equation}
Vertices are classified as
\begin{align}
I_b(\mathcal{P}) &= \{ (i,k) \colon v_k \in \partial \Domain,\; i \in I_{\partial \Domain}(\mathcal{P}),\; v_k \in \Vertex(X_i) \}, \nonumber \\
I_{\mathrm{Int}}(\mathcal{P}) &= \{ (i,k) \colon v_k \notin \partial \Domain,\; i \in I(\mathcal{P}),\; v_k \in \Vertex(X_i) \}.
\end{align}
Here, $I_b(\mathcal{P})$ corresponds to boundary vertices, while $I_{\mathrm{Int}}(\mathcal{P})$ corresponds to interior vertices.

We parameterize the barrier function as a PWA function on the same partition:
\begin{equation}\label{eq: BF PWA}
    h_i(x) = s_i^T x + t_i, \qquad i \in I(\mathcal{P}),
\end{equation}
where $s_i \in \mathbb{R}^n$ and $t_i \in \mathbb{R}$.  
In cell $X_j$, we model local affine dynamics with bounded additive uncertainty,
\[
\dot{x} = A_j x + a_j + \delta x, \qquad \delta x \in \Delta_j,
\]
where $\Delta_j$ is a polytopic uncertainty set (challenges addressed in verification). The derivative of $h_i$ is
\begin{equation}\label{eq:BF derivative definition robust}
    \dot{h}_j(x) = s_j^T (A_j x + a_j + \delta x), \quad x \in \Int{X_j},\; \delta x \in \Delta_j.
\end{equation}
Since $s_j^T \delta x$ is linear in $\delta x$, it suffices to evaluate~\eqref{eq:BF derivative definition robust} at $\Vertex(\Delta_j)$.

The robust optimization problem is
\begin{subequations}\label{eq:opt UIS}
\begin{align}
&\min_{ s_i, t_i,\tau_{\mathrm{Int}_{i}},\tau_{b_{i}}}  \quad  \sum_{i=1}^{M}\tau_{b_{i}}+\sum_{i=1}^{N}\tau_{\mathrm{Int}_{i}},\label{eq:cost_function}\\
&\text{Subject to:} \nonumber\\
&h_i(v_k)-\tau_{b_i}\leq -\epsilon_1, \quad \forall (i,k) \in I_{b}(\mathcal{P}),\label{eq:NB}\\
&h_i(v_k)+\tau_{\mathrm{Int}_i}\geq\epsilon_2, \quad \forall (i,k) \in I_{\mathrm{Int}}(\mathcal{P}),\label{eq:PI} \\
&s_j^T(A_jv_k+a_j+\delta x)+\alpha h_i(v_k)\geq \epsilon_3, \nonumber\\
& \forall (i,k) \in I(\mathcal{P}), \; \forall \delta x \in \Vertex(\Delta_j),\label{eq:Nagumo}\\
&h_i(v_k)=h_j(v_k), \quad \forall v_k \in \Vertex(X_i) \cap \Vertex(X_j),\label{eq:continuity}\\
&\tau_{b_i},\tau_{\mathrm{Int}_{i}}\geq 0,\label{eq:PS}
\end{align}    
\end{subequations}
with tolerances $\epsilon_1,\epsilon_2,\epsilon_3>0$. In~\eqref{eq:cost_function}, $M$ and $N$ denote the number of boundary and interior vertices, respectively. Constraints~\eqref{eq:NB}–\eqref{eq:PI} regulate boundary and interior vertices, using slack variables to preserve feasibility. Constraint~\eqref{eq:Nagumo} enforces the barrier condition at all vertices of the uncertainty set, while~\eqref{eq:continuity} ensures continuity across cells.

\begin{remark}
The optimization~\eqref{eq:opt UIS} is always feasible, but a nonzero $\sum_{i}\tau_{b_i}$ indicates that the resulting set is not certified invariant. In this case, refinement of boundary (and possibly interior) cells is required, following the vector field refinement procedure of~\cite{10313502}.
\end{remark}

In~\eqref{eq:opt UIS}, the barrier function $h(x)$ is synthesized through optimization. Thus, the choice of the class-$\mathcal{K}_\infty$ function $\alpha(\cdot)$ not only controls the strictness of the barrier condition but also directly influences the resulting $h(x)$ and the size of the certified invariant set. Unlike standard CBF formulations where $h(x)$ is fixed and $\alpha(\cdot)$ acts as a tuning gain, here $\alpha(\cdot)$ fundamentally shapes the optimization landscape.  

In~\cite{samanipour2024invariant}, $\alpha(\cdot)$ was restricted to the linear form $\alpha(s)=\alpha s$, with $\alpha>0$ tuned via bisection search over $\{\alpha_1,\dots,\alpha_m\}$, retaining the largest feasible set. While effective, computing the volume of these sets in higher dimensions is expensive and often conservative, as nonlinear $\alpha(\cdot)$ can yield typically larger invariant sets.

This motivates exploring nonlinear $\alpha(\cdot)$, such as Leaky ReLU. To address the conservatism of linear $\alpha(\cdot)$, we introduce Leaky ReLU functions as a flexible choice, enabling less conservative invariant set estimation. Combined with the Union of Invariant Sets (UIS) method, this typically enlarges the certified region by merging multiple LP-derived invariant sets via max-composition.
\subsection{Leaky ReLU: A Practical Choice for \texorpdfstring{$\alpha(\cdot)$}{alpha(·)} in BF Design}
As shown in the prior section, the choice of $\alpha(\cdot)$ critically influences the certified FI set, since $h(x)$ is synthesized through optimization~\eqref{eq:opt UIS}. To avoid the conservatism of linear functions and the complexity of optimizing arbitrary $\mathcal{K}_\infty$ functions, we propose the Leaky ReLU as a practical substitute. It requires only two tunable parameters, yet approximates nonlinear $\mathcal{K}_\infty$ functions and aligns naturally with neural network architectures.
\begin{theorem}\label{th:leaky}
Let $S \subset \Domain \subseteq \mathbb{R}^n$ denote the super-level set~\eqref{eq:safeset} of a locally Lipschitz function $h:\Domain\!\to\!\mathbb{R}$ under the closed-loop dynamics~\eqref{eq:cl-system}. Assume $h(\Domain)\subset [h_{\min},h_{\max}]$ with $h_{\min}<0<h_{\max}$, where these bounds are derived from the problem domain.

The following are equivalent:
\begin{enumerate}
\item[(a)] There exists $\alpha\in\mathcal{K}_\infty$ such that the barrier condition~\eqref{eq:BF} holds on $\Domain$—using classical derivatives where $h$ is smooth and Clarke generalized derivatives otherwise—rendering $S$ FI. Moreover, $\alpha$ satisfies the one-sided slope bounds at $0$:
\begin{equation}\label{eq:onesided}
\limsup_{s\downarrow 0}\frac{\alpha(s)}{s}<\infty,
\qquad
\liminf_{s\uparrow 0}\frac{\alpha(s)}{s}>0.
\end{equation}

\item[(b)] There exist $0<\alpha_1\le \alpha_m<\infty$ and the extended $\mathcal{K}_\infty$ function
\begin{equation}\label{eq:Leaky relu}
\overline{\alpha}(s) = 
\begin{cases} 
\alpha_m s, & s \ge 0, \\
\alpha_1 s, & s < 0,
\end{cases}
\end{equation}
such that the barrier condition~\eqref{eq:BF} holds on $\Domain$, rendering $S$ FI.
\end{enumerate}
\end{theorem}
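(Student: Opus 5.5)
The implication (b)$\Rightarrow$(a) is immediate. The function $\overline{\alpha}$ in~\eqref{eq:Leaky relu} is continuous and strictly increasing, satisfies $\overline{\alpha}(0)=0$, and is unbounded in both directions because $\alpha_1,\alpha_m>0$. Hence it is an extended $\mathcal{K}_\infty$ function. Its one-sided quotients are constant, $\overline{\alpha}(s)/s=\alpha_m$ for $s>0$ and $\overline{\alpha}(s)/s=\alpha_1$ for $s<0$, so~\eqref{eq:onesided} holds. Taking $\alpha=\overline{\alpha}$ in (a) therefore gives (a) directly.

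For (a)$\Rightarrow$(b), I will construct $\alpha_1\le\alpha_m$ such that $\overline{\alpha}(s)\ge\alpha(s)$ for every $s\in[h_{\min},h_{\max}]$. Once this dominance holds, the barrier condition transfers pointwise. Indeed, for every $x\in\Domain$ we have $h(x)\in[h_{\min},h_{\max}]$, and (at smooth points, and for every element of the Clarke generalized gradient otherwise)
\[
L_{f_{\mathrm{cl}}}h(x)\;\ge\;-\alpha(h(x))\;\ge\;-\overline{\alpha}(h(x)).
\]
The Clarke case needs no extra work, because the right-hand side depends only on the value $h(x)$ and not on which generalized gradient element is used. Forward invariance of $S$ then follows from the same nonsmooth barrier argument that (a) presupposes.

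The key step is choosing the two slopes so that dominance holds on a compact range. I treat the two sides of $0$ separately.

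\emph{Positive side.} On $(0,h_{\max}]$, dominance means $\alpha_m\ge\alpha(s)/s$. The function $q(s)=\alpha(s)/s$ is continuous and positive on $(0,h_{\max}]$. By the first condition in~\eqref{eq:onesided} there is $\delta>0$ with $q$ bounded on $(0,\delta]$. On $[\delta,h_{\max}]$, $q$ is bounded by compactness. Hence $\alpha_m:=\sup_{(0,h_{\max}]}q<\infty$, and $\alpha_m>0$ since $q>0$.

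\emph{Negative side.} For $s\in[h_{\min},0)$, dominance $\alpha_1 s\ge\alpha(s)$ is equivalent, after dividing by $s<0$, to $\alpha_1\le\alpha(s)/s$. Here $\alpha(s)/s$ is positive, since $\alpha(s)<0$ for $s<0$, and continuous on $[h_{\min},0)$. The second condition in~\eqref{eq:onesided} bounds it away from $0$ near $0^-$. Compactness of $[h_{\min},-\delta']$ together with positivity bounds it away from $0$ on the remainder. Thus $\beta:=\inf_{[h_{\min},0)}\alpha(s)/s>0$.

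I then set $\alpha_1:=\min\{\beta,\alpha_m\}$. This satisfies $0<\alpha_1\le\alpha_m$, and shrinking $\alpha_1$ only strengthens the inequality $\alpha_1\le\alpha(s)/s$. At $s=0$ both functions vanish, so dominance holds on all of $[h_{\min},h_{\max}]$.

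I expect the main obstacle to be this compactness-plus-local-limit argument. It is precisely where the assumptions~\eqref{eq:onesided} and the boundedness of $h(\Domain)$ are indispensable. Without them, $\alpha(s)/s$ could blow up as $s\downarrow0$ or degenerate to $0$ as $s\uparrow0$, as for $\sqrt{s}$-type or cubic-type $\alpha$. Either behavior would make a finite $\alpha_m$ or a positive $\alpha_1$ impossible. I will also state explicitly that $\overline{\alpha}$ need only dominate $\alpha$ on the range of $h$, not globally.
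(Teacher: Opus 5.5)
Your proof is correct and follows the paper's route. For (b)$\Rightarrow$(a) you take $\alpha=\overline{\alpha}$, and for (a)$\Rightarrow$(b) you let $\alpha_m$ and $\alpha_1$ be the supremum and infimum of $\alpha(s)/s$ on $(0,h_{\max}]$ and $[h_{\min},0)$, so that $\overline{\alpha}\ge\alpha$ on the range of $h$. Your version is more careful than the paper's in two ways. You spell out the compactness-plus-limit argument that the paper compresses into ``by compactness.'' Your choice $\alpha_1:=\min\{\beta,\alpha_m\}$ also repairs the paper's unjustified claim that $\inf_{s<0}\alpha(s)/s\le\sup_{s>0}\alpha(s)/s$, which fails for $\alpha(s)=s$ on $s\ge 0$ and $\alpha(s)=2s$ on $s<0$.
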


\begin{proof}
\textbf{(b) $\Rightarrow$ (a).}
The function $\overline{\alpha}$ in~\eqref{eq:Leaky relu} is continuous, strictly increasing, radially unbounded, and satisfies $\overline{\alpha}(0)=0$, hence $\overline{\alpha}\in\mathcal{K}_\infty$. For locally Lipschitz $h$, let $\partial^C h(x)$ denote the Clarke generalized gradient. The CBF condition~\eqref{eq:BF} in the nonsmooth setting is
\[
\sup_{\zeta \in \partial^C h(x)} \, \zeta^\top f_{\mathrm{cl}}(x) \;\ge\; -\,\overline{\alpha}\big(h(x)\big), \quad \forall x\in\Domain,
\]
which aligns with the classical condition where $h$ is differentiable. Standard nonsmooth CBF results (e.g.,~\cite{glotfelter2017nonsmooth}) confirm this ensures forward invariance of $S=\{x:h(x)\ge 0\}$. Thus (a) holds.

\smallskip
\textbf{(a) $\Rightarrow$ (b).}
Let $\alpha$ satisfy~\eqref{eq:onesided} and the (smooth/nonsmooth) CBF inequality in~\eqref{eq:BF}. Define
\[
\alpha_m := \sup_{s\in(0,h_{\max}]}\frac{\alpha(s)}{s}, 
\qquad
\alpha_1 := \inf_{s\in[h_{\min},0)}\frac{\alpha(s)}{s}.
\]
By compactness of $[h_{\min},h_{\max}]$ and the one-sided slope bounds~\eqref{eq:onesided}, $0<\alpha_1\le \alpha_m<\infty$. For all $s\in[h_{\min},h_{\max}]$,
\[
\alpha(s)\;\le\;
\begin{cases}
\alpha_m s,& s\ge 0,\\
\alpha_1 s,& s<0,
\end{cases}
=:\overline{\alpha}(s),
\quad \overline{\alpha}(0)=0.
\]
Thus, for every $x\in \Domain$,
\[
\sup_{\zeta \in \partial^C h(x)} \, \zeta^\top f_{\mathrm{cl}}(x) 
\;\ge\; -\,\alpha\big(h(x)\big) 
\;\ge\; -\,\overline{\alpha}\big(h(x)\big),
\]
showing the CBF condition holds with $\overline{\alpha}$, ensuring $S$’s invariance. Thus (b) holds.
\end{proof}

\begin{remark}
Theorem~\ref{th:leaky} applies to locally Lipschitz $h$, covering both smooth and nonsmooth barrier functions. In the nonsmooth case, the derivative in~\eqref{eq:BF} uses Clarke gradients, aligning with~\cite{glotfelter2017nonsmooth}, and supports PWA functions from~\eqref{eq:opt UIS} as well as smooth cases.
\end{remark}
\begin{remark}
Theorem~\ref{th:leaky} is essential as $h(x)$, synthesized via~\eqref{eq:opt UIS}, is unknown beforehand. In our optimization, $\alpha(\cdot)$ must uphold the barrier constraint~\eqref{eq:BF} across all points in $\Domain$, unlike standard CBFs that verify it only on $\partial S$, ensuring $S$’s invariance.
\end{remark}

This result enables certification with a two-parameter Leaky ReLU. Any invariant set certifiable by $\alpha\in\mathcal{K}_\infty$ is also certifiable by $\overline{\alpha}$. We adopt this as a tractable, NN-compatible surrogate for $\alpha(\cdot)$, simplifying optimization while preserving certifiability.

\subsection{Union of Invariant Sets (UIS)}\label{sec:UIS}
Building on the Leaky ReLU design for $\alpha(\cdot)$, we introduce the Union of Invariant Sets (UIS) method. 
Directly encoding a Leaky ReLU $\alpha(\cdot)$ into~\eqref{eq:opt UIS} makes the barrier constraint nonconvex, typically requiring a mixed-integer program that undermines convexity. 
Instead, UIS solves~\eqref{eq:opt UIS} for multiple linear $\alpha(x) = \gamma x$, each as a convex program, and unites the resulting invariant sets. 
The key innovation is max-composition of the associated barrier functions, certifying the \emph{union} as forward-invariant. 
Unlike our prior $\alpha$-sweep, which tested various $\gamma$ values, selected the largest single invariant set, and discarded other results—wasting computation—UIS harnesses all LP solutions to yield a single, enlarged certified region without additional solves or costly set comparisons.

\paragraph*{Notation.}
Because the solution of~\eqref{eq:opt UIS} depends on both the partition and $\alpha(\cdot)$, we write the certified set and barrier as
$S^{(i)}:=S^{(i)}(\mathcal{P}_i,\alpha_i)$ and $h^{(i)}:=h^{(i)}(x,\mathcal{P}_i,\alpha_i)$, with $\alpha_i(x)=\alpha_i x$ unless otherwise stated.
\begin{lemma}\label{lemma:UIS}
Consider the dynamical system given by~\eqref{eq:pwa dynamic} with an equilibrium at the origin, defined on a compact set $\Domain$.
Let $\underline{\alpha} = [\alpha_{min},\dots, \alpha_{max}]$ be a set of $m$ parameters ordered in ascending order. 
Suppose that the corresponding invariant sets, obtained through the optimization problem~\eqref{eq:opt UIS} are denoted by $\Sj{i}$ for $i=1,\dots,m$ with $\alpha(x)=\alpha_ix$.
Then, the following results hold:
\begin{enumerate}
    \item There exists an invariant set $\overline{S}$ with respect to the dynamical system~\eqref{eq:pwa dynamic}, where \( \overline{S} \) is given by:
    \begin{equation}\label{eq:union of of inv}
    \overline{S} = \bigcup_{i=1}^m \Sj{i}.    
    \end{equation}
    \item The set $\overline{S}$ is rendered \emph{asymptotically stable} in $\Domain$ with the following BF and class-$\mathcal{K}_{\infty}$ function:
    \begin{equation}\label{eq:h UIS}
    \overline{h}(x, \overline{\prtition}, \overline{\alpha}) = \max_{i} \{ \hj{i} \},
    \end{equation}
    where $\overline{\alpha}(x)$ is defined as:
    \begin{align}
        \overline{\alpha}(x) = \alpha_{max} \sigma_{\left(\frac{\alpha_{min}}{\alpha_{max}}\right)}(x),
    \end{align}
    and $\sigma_{\left(\frac{\alpha_{min}}{\alpha_{max}}\right)}(x)$ denotes the Leaky ReLU function as defined in~\eqref{eq:Leaky relu}. The partition $\overline{\prtition}$ is the product partition defined as:
    \begin{equation}
        \overline{\mathcal{P}} = \mathcal{P}_1 \times \mathcal{P}_2 \times \dots \times \mathcal{P}_m, 
    \end{equation}
    where:
    \begin{equation}
     \prtition^*=\prtition_1\times\prtition_2=\{X_k\}_{k\in I(\prtition^*)},\nonumber   
    \end{equation}
    \begin{equation}
    X_k=\{X_i\cap X_j:i\in I(\prtition_1) ,j\in I(\prtition_2),dim(X_k)=n\}.\nonumber    
    \end{equation}
\end{enumerate}
\end{lemma}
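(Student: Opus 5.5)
The plan is to show that $\overline{h}=\max_i h^{(i)}$ satisfies the nonsmooth barrier condition~\eqref{eq:BF} on $\Domain$ with the Leaky ReLU $\overline{\alpha}$. Theorem~\ref{th:leaky} (direction (b)$\Rightarrow$(a)) and the nonsmooth CBF results of~\cite{glotfelter2017nonsmooth} then give forward invariance of $\{x:\overline{h}(x)\ge 0\}$. Since $\overline{h}(x)\ge 0$ exactly when some $h^{(i)}(x)\ge 0$, this superlevel set is $\bigcup_i S^{(i)}=\overline{S}$, which proves item 1.

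First I record structure. Each $h^{(i)}$ is continuous PWA on $\mathcal{P}_i$ because of~\eqref{eq:continuity}. The pointwise max of finitely many continuous PWA functions is continuous PWA, affine on each full-dimensional cell $X_{i_1}\cap\dots\cap X_{i_m}$ of $\overline{\mathcal{P}}$ (refined further along switching hyperplanes if needed), and locally Lipschitz. I also use that each LP solution satisfies~\eqref{eq:Nagumo} at every vertex of every cell. Since $s_j^\top(A_jx+a_j+\delta x)+\alpha_i h^{(i)}(x)$ is affine in $x$ on $X_j$, it is $\ge\epsilon_3>0$ on all of $X_j$. Thus for every $i$, at every point of differentiability of $h^{(i)}$,
\[
\nabla h^{(i)}(x)^\top \hat f_{cl}(x) \ge -\alpha_i h^{(i)}(x) + \epsilon_3 .
\]
At cell boundaries the same bound holds for every element of the Clarke gradient, because that gradient is the convex hull of the adjacent cell gradients and the inequality is convex in the gradient.

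The key step uses the active index set $J(x)=\{i: h^{(i)}(x)=\overline{h}(x)\}$. By the Clarke max rule, $\partial^C\overline{h}(x)\subseteq \mathrm{conv}\bigcup_{i\in J(x)}\partial^C h^{(i)}(x)$. For $i\in J(x)$ we have $h^{(i)}(x)=\overline{h}(x)$, so I need $-\alpha_i\overline{h}(x)\ge -\overline{\alpha}(\overline{h}(x))$ for all $i$. This is where the Leaky ReLU enters. If $\overline{h}\ge0$, then $\alpha_i\overline{h}\le\alpha_{max}\overline{h}=\overline{\alpha}(\overline{h})$. If $\overline{h}<0$, then $\alpha_i\overline{h}\le\alpha_{min}\overline{h}=\overline{\alpha}(\overline{h})$. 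Since $\overline{\alpha}(s)=\alpha_{max}\sigma_{\alpha_{min}/\alpha_{max}}(s)$ is exactly this two-slope function, we get $\zeta^\top\hat f_{cl}\ge-\overline{\alpha}(\overline{h})+\epsilon_3$ for every $\zeta$ in the convex hull. This yields the required inequality, and in fact a strict one, for every selection of the Clarke gradient. That gives the uniform (Clarke/inf) version of~\eqref{eq:BF}, which is stronger than the sup version used in Theorem~\ref{th:leaky}.

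For item 2 (asymptotic stability of $\overline{S}$ in $\Domain$), I use the strict margin $\epsilon_3$. Outside $\overline{S}$ we have $\overline{h}<0$, and along trajectories the upper Dini derivative satisfies $\dot{\overline{h}}\ge -\alpha_{min}\overline{h}+\epsilon_3>0$. So $\overline{h}$ increases at rate at least $\epsilon_3$ and reaches $0$ in finite time bounded by $|h_{\min}|/\epsilon_3$, provided the trajectory stays in $\Domain$. Combined with invariance, $\overline{h}$ serves as a set-Lyapunov function, giving attractivity and stability of $\overline{S}$.

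The main obstacle is justifying that trajectories starting in $\Domain\setminus\overline{S}$ do not leave $\Domain$ before reaching $\overline{S}$. My plan there is to invoke $\overline{h}$ nondecreasing and the boundary constraint~\eqref{eq:NB}, which makes $\overline{h}<0$ near $\partial\Domain$. If needed, I will restrict the claim to initial conditions whose trajectories remain in $\Domain$, as is standard for PWA surrogates defined only on $\Domain$.

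A secondary technical point is that the nonsmooth chain rule needs absolutely continuous (Filippov) solutions. The surrogate PWA field is continuous, so classical solutions exist and the Clarke-based argument applies directly.
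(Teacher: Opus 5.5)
Your barrier inequality for $\overline h$ is correct and follows the paper's route: the active index set, Clarke's max rule, and the sign split $\alpha_i\overline h\le\alpha_{max}\overline h$ for $\overline h\ge 0$ and $\alpha_i\overline h\le\alpha_{min}\overline h$ for $\overline h<0$. You also tighten the paper's argument in two places.
First, you extend the vertex constraint \eqref{eq:Nagumo} to each closed cell by affinity; the paper leaves this implicit in ``feasibility of $h^{(k)}$''.
Second, and more importantly, you prove the inequality for every $\zeta\in\partial^C\overline h(x)$ using only the inclusion $\partial^C\overline h(x)\subseteq\mathrm{co}\bigcup_{i\in J(x)}\partial^C h^{(i)}(x)$. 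The paper instead writes the max rule as an equality and lower-bounds the supremum. That equality needs the pieces to be regular, which a continuous PWA $h^{(i)}$ with a concave kink is not. For example, $\max(-|x|,0)$ has Clarke gradient $\{0\}$ at $0$, not $[-1,1]$. Under mere inclusion, a lower bound on the supremum over the hull does not pass to the supremum over $\partial^C\overline h$. Your ``for all $\zeta$'' version is therefore the one that actually closes, and it also gives invariance directly through the a.e.\ chain rule.
For item 1 you take a longer path than needed. The paper just notes that a trajectory starting in $S^{(k)}$ stays in $S^{(k)}\subseteq\overline S$, with no nonsmooth machinery. Deriving item 1 from the barrier condition on $\overline h$ is valid but redundant.

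The step that does not work is your proposed fix for trajectories leaving $\mathcal{D}$. On asymptotic stability the paper gives no argument; it only asserts that stability follows from the BF condition. Your finite-time reaching argument with the $\epsilon_3$ margin therefore already goes further. Suppose $\overline h$ is nondecreasing off $\overline S$ and $\overline h\le-\epsilon_1$ on $\partial\mathcal{D}$, as \eqref{eq:NB} gives with zero slacks. This still does not prevent exit: $\overline h$ on $\partial\mathcal{D}$ may exceed $\overline h(x_0)$, and nothing in \eqref{eq:opt UIS} constrains the direction of $\hat f_{cl}$ relative to $\partial\mathcal{D}$.
What you can actually prove is stability of $\overline S$ plus attractivity for trajectories that remain in $\mathcal{D}$. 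In particular you get local asymptotic stability: $\overline S\subset\mathrm{Int}(\mathcal{D})$ when the slacks vanish, and the reaching time $|\overline h(x_0)|/\epsilon_3$, hence the displacement, shrinks as $x_0\to\overline S$. Attractivity from all of $\mathcal{D}$ needs an extra hypothesis, so restricting the claim, as in your fallback, is the right call.
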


\begin{proof}
(1) If $x_0\in\overline{S}$, then $x_0\in S^{(k)}$ for some $k$. Since $S^{(k)}$ is invariant for~\eqref{eq:pwa dynamic}, the trajectory remains in $S^{(k)}\subseteq \overline{S}$, proving invariance.

(2) By construction $\overline{h}=\max_i h^{(i)}(x,\prtition_i,\alpha_i)$, so $\overline{S}=\{x:\overline{h}(x)\ge 0\}$, and $\overline{h}$ is locally Lipschitz and PWA on $\overline{\mathcal{P}}$. 
At differentiable points $x^*$ where $\overline{h}(x^*)=h^{(k)}(x^*,\prtition_k,\alpha_k)$,
\[
\dot{\overline{h}}(x^*)=\dot{h}^{(k)}(x^*,\prtition_k,\alpha_k).
\]
If $\overline{h}(x^*)>0$, then $\alpha_{\max}\overline{h}(x^*)\ge \alpha_k h^{(k)}(x^*,\prtition_k,\alpha_k)$ and feasibility of $h^{(k)}()$ gives
\begin{align*}
&\dot{\overline{h}}(x^*)+\alpha_{\max}\overline{h}(x^*)\ \ge \\ &\dot{h}^{(k)}(x^*,\prtition_k,\alpha_k)+\alpha_k h^{(k)}(x^*,\prtition_k,\alpha_k)\ \ge\ 0.
\end{align*}
If $\overline{h}(x^*)<0$, then $\overline{\alpha}(\overline{h}(x^*))=\alpha_{\min}\overline{h}(x)\ge \alpha_k h^{(k)}(x^*,\prtition_k,\alpha_k)$, so
\[
\dot{\overline{h}}(x^*)+\overline{\alpha}(\overline{h}(x*))\ \ge\ 0.
\]

\noindent\textbf{Nondifferentiable points.}
Since $\overline{h}(x)=\max_i h^{(i)}(x,\prtition_i,\alpha_i)$ and each $h^{(i)}(x,\prtition_i,\alpha_i)$ is locally Lipschitz, Clarke calculus yields
\begin{align*}
&\partial^{C}\overline{h}(x)=\\&\mathrm{co}\Big\{\partial^{C}h^{(i)}(x,\prtition_i,\alpha_i):\ i\in\arg\max_{j} h^{(j)}(x,\prtition_j,\alpha_j)\Big\}.
\end{align*}
Hence
$$
\begin{aligned}
&\sup_{\zeta\in \partial^{C}\overline{h}(x)}\ \zeta^\top \PWA(x)\\
&= \max_{i\in \arg\max_{j} h^{(j)}(x,\prtition_j,\alpha_j)}\ \sup_{\zeta\in \partial^{C}h^{(i)}(x,\prtition_i,\alpha_i)}\ \zeta^\top \PWA(x)\\
&\ge -\,\overline{\alpha}\!\big(\overline{h}(x)\big),
\end{aligned}
$$
since each active $h^{(i)}$ satisfies the BF inequality and $\overline{\alpha}$ dominates the corresponding linear slope.
Therefore, the BF condition holds everywhere for $\overline{h}$ with $\overline{\alpha}$, implying forward invariance and asymptotic stability of $\overline{S}$ in $\Domain$.
\end{proof}
\begin{remark}
UIS requires no additional optimization beyond the baseline multi-$\alpha$ synthesis. 
It simply reuses the LP solutions already obtained for different linear gains and performs a max-composition step, yielding the certified union without extra solve cost.
\end{remark}

\begin{corollary}\label{cor:UIS}
Let $S'=\arg\max_{i} \mathrm{vol}(S^{(i)})$ among sets obtained from Algorithm~\ref{alg:union of invariant sets} with linear $\alpha(x)=\alpha_i x$. 
Then $S'\subseteq \overline{S}$, i.e., UIS never returns a smaller certified set than the best single linear choice.
\end{corollary}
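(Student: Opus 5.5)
The argument is purely set-theoretic and rests on Lemma~\ref{lemma:UIS}. First we pin down what $S'$ is. The maximizer of $\mathrm{vol}(S^{(i)})$ over the finite index set $\{1,\dots,m\}$ always exists, since there are only $m$ candidates. Ties do not matter: whichever maximizer is chosen, $S' = S^{(k)}$ for some index $k\in\{1,\dots,m\}$, and $S^{(k)}$ is one of the sets produced by Algorithm~\ref{alg:union of invariant sets} with the linear gain $\alpha(x)=\alpha_k x$.

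Next we use part (1) of Lemma~\ref{lemma:UIS}, which says $\overline{S}=\bigcup_{i=1}^m S^{(i)}(\prtition_i,\alpha_i)$ is forward invariant for~\eqref{eq:pwa dynamic}. Part (2) says that $\overline{S}$ is exactly the zero super-level set of the certified barrier $\overline{h}=\max_i h^{(i)}$, taken with the Leaky ReLU $\overline{\alpha}$. To see that containment holds at the level of certificates, take any $x\in S^{(k)}$. Then $h^{(k)}(x,\prtition_k,\alpha_k)\ge 0$, so
\[
\overline{h}(x)=\max_i h^{(i)}(x,\prtition_i,\alpha_i)\ge h^{(k)}(x,\prtition_k,\alpha_k)\ge 0 .
\]
Hence $x\in\{\overline{h}\ge 0\}=\overline{S}$, which gives $S'=S^{(k)}\subseteq\overline{S}$. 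Monotonicity of Lebesgue measure then yields $\mathrm{vol}(S')\le\mathrm{vol}(\overline{S})$, i.e.\ UIS is never smaller than the best single linear choice.

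The only step needing care is the identification $\overline{S}=\{x:\overline{h}(x)\ge 0\}$. This is what makes the containment a statement about the \emph{certified} set rather than just the union of sets. It follows directly from $\max_i h^{(i)}(x)\ge 0 \iff \exists i:\ h^{(i)}(x)\ge 0$, and it is already asserted in the proof of Lemma~\ref{lemma:UIS}(2), so we would cite it there. All remaining steps are elementary.
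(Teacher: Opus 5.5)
Your argument is correct and is essentially the paper's proof, which just says the result follows from the construction of UIS. In both, $S'$ equals some $S^{(k)}$ and $\overline{S}$ is by definition $\bigcup_{i=1}^m S^{(i)}$ (equivalently $\{x:\overline{h}(x)\ge 0\}$ with $\overline{h}=\max_i h^{(i)}$), so $S^{(k)}\subseteq\overline{S}$ is immediate; your appeal to forward invariance from Lemma~\ref{lemma:UIS} is not needed for the inclusion, but does no harm.
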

\begin{proof}
    The result follows directly from the construction of the UIS algorithm.
\end{proof}

Corollary~\ref{cor:UIS} shows that UIS does not always yield a strictly larger invariant set (e.g., $S(\mathcal{P}_4,\alpha_4)$ in Fig.~\ref{fig:UIS process} may add no new points). Importantly, since our baseline already solves the multi-$\alpha$ LP sweep, UIS incurs \emph{no additional optimization cost} while still enabling potentially larger certified sets. In practice, we select an initial $\alpha_{0}$ uniformly at random from $\underline{\alpha}$, solve for $S(\prtition_0,\alpha_0)$, and then evaluate nearby values of $\alpha$.
Note that Algorithm~\ref{alg:union of invariant sets} as stated has no built-in termination condition and can in principle run indefinitely.
\begin{algorithm}[tb]
\begin{algorithmic}[1]
\REQUIRE $\PWA(x)$ dynamics~\eqref{eq:pwa dynamic}, slope grid $\underline{\alpha}=\{\alpha_1,\dots,\alpha_m\}$.
\FOR{$j=1,2,\dots,m$}
    \STATE Set $\alpha(s)=\alpha_j s$.
    \STATE Solve~\eqref{eq:cost_function}--\eqref{eq:PS} for~\eqref{eq:pwa dynamic} with $\alpha(s)$ to obtain 
    $h^{(j)}(x,\mathcal{P}_j,\alpha_j)$ and the certified set $S^{(j)}(\mathcal{P}_j,\alpha_j)$.
    \WHILE{$\sum_{i=1}^{M}\tau_{b_i}\neq 0\ \ \textbf{or}\ \ \sum_{i=1}^{N}\tau_{\mathrm{Int}_i}\neq 0$}
        \STATE Refine cells with nonzero slacks as in~\cite{10313502}.
        \STATE Re-solve~\eqref{eq:cost_function}--\eqref{eq:PS} on the refined partition to update 
        $h^{(j)}(x,\mathcal{P}_j,\alpha_j)$ and $S^{(j)}(\mathcal{P}_j,\alpha_j)$.
    \ENDWHILE
\ENDFOR
\STATE Form the UIS barrier and set:
\begin{align*}
    &\overline{h}(x,\overline{\mathcal{P}},\overline{\alpha})=\max_{j} h^{(j)}(x,\mathcal{P}_j,\alpha_j)\\
    &\overline{S}=\bigcup_{j=1}^{m} S(\mathcal{P}_j,\alpha_j)
\end{align*}
\RETURN $\overline{S}$ and $\overline{h}(x,\overline{\mathcal{P}},\overline{\alpha})$.
\end{algorithmic}
\caption{Union of Invariant Sets (UIS) for $\PWA$ dynamics~\eqref{eq:pwa dynamic}}
\label{alg:union of invariant sets}
\end{algorithm}

\section{Constructing CBFs for Nonlinear Dynamics}

The UIS framework estimates FI sets for surrogate PWA dynamics~\eqref{eq:pwa dynamic}, but this does not imply invariance for the original nonlinear system~\eqref{eq:system}. A natural idea is to incorporate model mismatch as additive uncertainty in each polytopic region. However, doing so from the outset poses two challenges: 
(i) it significantly increases the size of the LP~\eqref{eq:opt UIS}, and 
(ii) constructing accurate uncertainty bounds $\Delta_j$ for every region is nontrivial and often conservative.

We instead adopt a scalable \emph{solve–verify–update} pipeline. First, we synthesize a barrier function for the surrogate PWA model with no uncertainty. This step uses a fixed nominal feedback controller, so the resulting barrier certifies a forward-invariant set for the induced surrogate closed-loop dynamics.

Next, we verify whether the candidate barrier satisfies the CBF condition for the original nonlinear system with input constraints. Unlike synthesis, this step checks whether \emph{any admissible control} exists that can keep trajectories within the estimated safe set. Verification is performed via Farkas' lemma on selected boundary regions.

Importantly, because the candidate barrier is piecewise affine, its zero-level set can be decomposed into a finite number of \emph{facet patches}—each corresponding to a local active region of the max-composed function. We exploit this structure to check the CBF condition only on these boundary patches, not the full domain. This patch-wise verification is exact and highly efficient.

If a violation is found, we extract a counterexample state and compute the deviation between the true dynamics and the surrogate model at that point. This yields a \emph{cell-local uncertainty set} $\Delta_j$, which is added only to the affected region. We then re-solve the LP globally, keeping the rest of the model unchanged. This selective update avoids conservative global uncertainty while preserving tractability.

\paragraph*{Facet index set.}
Let
\[
\overline{h}(x,\overline{\mathcal{P}},\overline{\alpha})
=\max_{k\in I_h} h_k(x,\mathcal{P}_k,\alpha_k),\qquad I_h=\{1,2,\dots,m\},
\]
and, let $\overline{\mathcal{P}}=\{X_j\}$ denote the fixed (refined) partition of $\overline{h}$. Collect the boundary cell–active piece pairs
\begin{align*}
  \mathcal{I}_F \;:=\;\bigl\{(j,k)\;:\;& \exists\,x\in X_j \text{ s.t. } \overline{h}(x,\overline{\mathcal{P}},\overline{\alpha})=0,\\
  & \overline{h}(x,\overline{\mathcal{P}},\overline{\alpha})=h_k(x,\mathcal{P}_k,\alpha_k)\bigr\}.
\end{align*}
For each $(j,k)\in\mathcal{I}_F$, define the facet patch
\begin{align*}
F_{j,k} \;:=\;&\bigl\{\, x\in X_j:\; h_k(x,\mathcal{P}_k,\alpha_k)=0,\\
& \ \ h_k(x,\mathcal{P}_k,\alpha_k)\ge h_\ell(x,\mathcal{P}_\ell,\alpha_\ell)\ \forall \ell\in I_h \bigr\}.
\end{align*}

\paragraph*{Local verification }
Consider the input constraint set $U=\{u:A_uu\le c_u\}$.  
On each facet patch $F_{j,k}$, we check the CBF condition~\eqref{eq:CBF} by eliminating $u$ with Farkas’ lemma for~\eqref{eq:system}. For every $(j,k)\in\mathcal{I}_F$, solve
\begin{equation}
\label{eq:facet-verify-pairs}
\begin{aligned}
\Phi_{j,k}^\star:=\min_{x,\ \lambda\ge 0}\quad & s_k^\top f(x)+\lambda^\top c_u\\
\text{s.t.}\quad & x\in F_{j,k},\\
& \lambda^\top A_u=s_k^\top g(x),
\end{aligned}
\end{equation}
where $s_k:=\nabla h_k$ is constant on $X_j$. If $\Phi_{j,k}^\star<0$ then a counterexample exists on $F_{j,k}$—i.e., there is an $x\in F_{j,k}$ for which \emph{no} $u\in U$ can satisfy~\eqref{eq:CBF}; otherwise the CBF constraint holds on that patch. 
This guarantee is independent of the nominal law $\kappa(x)$ used during synthesis, ensuring the safe set is physically realizable even if the practical control law differs from the nominal one.
In implementation, we apply a standard CBF-QP safety filter that minimally modifies $\kappa(x)$, and feasibility of this filter on each verified facet patch follows from~\eqref{eq:facet-verify-pairs}.
This elimination-based “exact verification” mirrors the dual/support-function formulations used in~\cite{zhang2023exact}.


\paragraph*{Practical note.}
Problem~\eqref{eq:facet-verify-pairs} has decision variables $(x,\lambda)$ only, with $x\in\mathbb{R}^n$ and $\lambda\in\mathbb{R}^p$, where $p$ is the number of inequality rows in $A$ for $U=\{u:A_uu\le c_u\}$. Thus the program dimension is $n+p$; the control $u$ is eliminated via Farkas’ lemma. The set constraint $x\in F_{j,k}\subseteq X_j$ is polyhedral (hence cheap), while the coupling $\lambda^\top A_u=s_k^\top g(x)$ and the term $s_k^\top f(x)$ are smooth in $x$. Each per–$(j,k)$ instance is local and small, and can be solved reliably with off-the-shelf nonlinear solvers (e.g., \texttt{scipy.optimize}).

\subsection{Cell-local robustification via uncertainty update}
If verification produces a counterexample $x^\ast\in F_{j,k}$, set
$e^\ast:=f(x^\ast)-\hat f(x^\ast),$
\[
\Delta_j:=
\begin{cases}
\{\delta x:\ |\delta x|\le |e^\ast|\}, & \Phi_{j,k}^\star<0,\\
\{0\}, & \text{otherwise,}
\end{cases}
\]
where $|\cdot|$ is taken componentwise (box uncertainty). Using the counterexample, we obtain a tighter \emph{local} uncertainty only in cell $X_j$, avoiding an overly conservative global bound. 

\subsection{Search algorithm}

The complete procedure is summarized in Algorithm~\ref{alg:uis-verify-robustify-refine}. We begin by applying UIS to synthesize a candidate barrier function on the surrogate dynamics without uncertainty. This is followed by patch-wise verification on the original nonlinear system. If a violation is detected, we compute a cell-local uncertainty $\Delta_j$, re-solve the LP with this localized robustification, and repeat. The loop terminates when all patches pass verification, yielding a valid CBF for the nonlinear system.

\paragraph*{Scalability.}
Both the UIS solves (per cell) and the facet verifications (per $(j,k)$) are independently parallelizable (no cross-coupling), enabling efficient multicore/GPU execution; optional partition refinement can be applied locally when needed.

\begin{algorithm}[t]
\caption{UIS $\to$ Nonlinear Verify $\to$ Cell-Local Robustify \& Refine (no-failure)}
\label{alg:uis-verify-robustify-refine}
\begin{algorithmic}[1]
\REQUIRE Nonlinear dynamic~\eqref{eq:cl-system}, fixed partition $\mathcal{P}^{\ast}=\{X_j\}$, nominal PWA model $\hat f$, input set $U=\{u:A_uu\le c_u\}$, tolerances $\varepsilon,\varepsilon_3$
\STATE \textbf{Start from UIS (no uncertainty):} Solve \eqref{eq:opt UIS} with $\Delta_j=\emptyset\ \forall j$ to obtain $\overline{h}(x,\overline{\mathcal{P}},\overline{\alpha})=\max_i h^{(i)}(x,\mathcal{P}_i,\alpha_i)$.
\STATE \textbf{Verify on nonlinear facets:} For each $(j,k)\in\mathcal{I}_F$ (facet $F_{j,k}$), compute $\Phi_{j,k}^\star$ via \eqref{eq:facet-verify-pairs} \emph{using the original dynamics} $f(x),g(x)$ and input set $U$.
\WHILE{there exists $(j,k)\in\mathcal{I}_F$ with $\Phi_{j,k}^\star<0$ (witness $x^{\ast}\in F_{j,k}$)}
     \STATE \textbf{Update $\Delta_j$ \& re-solve:}
     For each counterexample $x^{\ast}$, let $e^{\ast}=f(x^{\ast})-\hat f(x^{\ast})$ and update
     $\Delta_j \leftarrow \operatorname{conv}\!\big(\Delta_j \cup \{\pm e^{\ast}\}\big)$;
     re-solve \eqref{eq:opt UIS}.
  \WHILE{$\sum_{i=1}^{N} \tau_{b_i} \neq 0$}
      \FOR{each $i$ with $\tau_{b_i}\neq 0$ \textbf{and} $\tau_{\mathrm{Int}_i}\neq 0$}
          \STATE Refine cells as in~\cite{10313502}.
      \ENDFOR
      \STATE Re-solve the same LP~\eqref{eq:opt UIS} (robust only on cells with $\Delta_j\neq\emptyset$).
  \ENDWHILE
  \STATE Update $\overline{h}(x,\overline{\mathcal{P}},\overline{\alpha})=\max_i h_i(x,\mathcal{P}_i,\alpha_i)$ from the LP solution.
  \STATE \textbf{Re-verify:} Recompute $\Phi_{j,k}^\star$ via~\eqref{eq:facet-verify-pairs} \emph{using the original dynamics} $f(x),g(x)$ and $U$ on all facets.
\ENDWHILE
\STATE \textbf{Return} $\overline{h}(x,\overline{\mathcal{P}},\overline{\alpha})$ as the certified CBF for \eqref{eq:system}; the safe set is $S=\{x:\overline{h}(x,\overline{\mathcal{P}},\overline{\alpha})\ge 0\}$.
\end{algorithmic}
\end{algorithm}

\section{Results and Simulations}
All computations are performed in Python 3.11 on a 2.1 GHz CPU with 8 GB RAM. A tolerance of $10^{-6}$ determines nonzero values, and $\epsilon_1=\epsilon_2=\epsilon_3=10^{-4}$ in the examples.
\begin{example}[Inverted Pendulum~\cite{samanipour2024invariant}]\label{example:IP}
The inverted pendulum system can be modeled as follows:
\begin{align}
\dot{x}_1 &= x_2 \nonumber, \quad\dot{x}_2 = \sin(x_1) + u \nonumber
\end{align}
where $x_1$ represents the pendulum's angle, and $x_2$ is its angular velocity. 
We apply a nominal control law $u = -3x_1 - 3x_2$, subject to saturation limits between $-1.5$ and $1.5$. 
The system's domain is $\Domain = \pi - ||x||_\infty \geq 0$, where $||x||_\infty$ denotes the infinity norm. 

The closed-loop dynamics under this nominal control are approximated using a ReLU neural network with a single hidden layer of eight neurons. 
The invariant set is estimated by solving the optimization problem~\eqref{eq:opt UIS} with multiple linear $\alpha$ values, specifically $\alpha = [0.025, 0.05, 0.06]$, and with no uncertainty included in the PWA approximation. 
As illustrated in Fig.~\ref{fig:UIS}, each level set corresponds to an invariant set obtained from the single-$\alpha$ method of~\cite{samanipour2024invariant}. 
The UIS max-composition certifies their union, yielding a strictly larger invariant set than any individual level set. 
Moreover, the certified set is verified on the original nonlinear dynamics, and the computational times are reported in Table~\ref{tab:times}.
\begin{figure}
    \centering
    \includegraphics[width=0.85\linewidth]{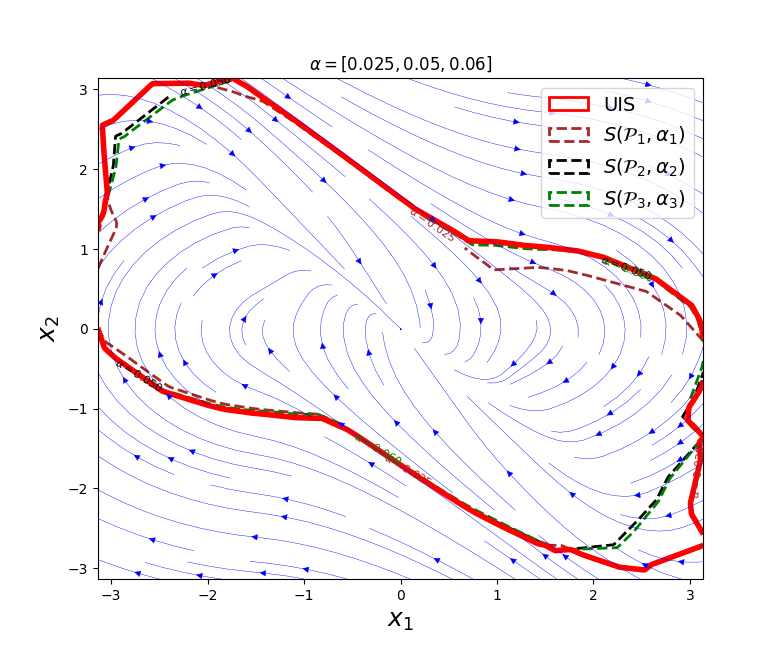}
    \caption{
    The final invariant set is obtained by verifying a CBF synthesized through the UIS method with three different choices of $\alpha$, as described in Sec.~\ref{sec:UIS}.
    }
    \label{fig:UIS}
\end{figure}

\end{example}
\begin{example}[Barrier Certificate Verification]\label{example:comparison example}
We consider the nonlinear system from~\cite{zhao2021synthesizing,zhang2023exact}:
\[
\dot{x}_1 = x_1^2 + x_1 x_2 + x_1, \quad
\dot{x}_2 = x_1 x_2 + x_2^2 + x_2,
\]
over the domain $\mathcal{D} = \{ x : \|x\|_\infty \le 2 \}$. The goal is to verify that trajectories starting in the initial region $I_m = \{ 0.5 \le x_1, x_2 \le 1 \}$ do not enter the unsafe set $U_m = \{ -1 \le x_1, x_2 \le 0 \}$.
We approximate the system using a single-hidden-layer ReLU neural network with 20 neurons. Invariant sets corresponding to fixed $\alpha$ values are computed via~\eqref{eq:opt UIS}, and their max-composition is certified using the UIS approach. This results in a provably safe set that is strictly larger than any single-$\alpha$ solution or SOS-based barrier, and comparable in coverage to SyntheBC~\cite{zhao2021synthesizing}.

Table~\ref{tab:times} reports the total certification time. While~\cite{zhang2023exact} uses a deeper network (two hidden layers with 256 ReLU neurons) and spends over 315 seconds (or several hours using SMT) to certify a Darboux-style barrier, our approach uses a compact single-layer ReLU model with only 20 neurons. A comparison of the results is shown in Fig.~\ref{fig:path-following}.
\end{example}

\begin{example}[Cart-Pole with Input Saturation]
\label{example:cartpole}
We study the classic cart-pole system~\cite{tedrake2009underactuated} with state 
$x, \dot{x}, \theta, \dot{\theta}$, cart mass $m_c=1.0$, pole mass $m_p=0.1$, pole length $l=1.0$, and gravity $g=9.81$. 
The horizontal input force $f_x$ is saturated within $[-30,30]$ N. 
Following~\cite{tedrake2009underactuated}, we shift $\theta \mapsto \pi-\theta$ so that the upright equilibrium is at the origin. 

The system is stabilized by the LQR controller from~\cite{wu2023neural}, 
\[
u = x + 2.4109\,\dot{x} + 34.3620\,\theta + 10.7009\,\dot{\theta},
\]
with domain $\mathcal{D} = \{x \in \mathbb{R}^4 : 1-|x|_{\infty} \geq 0\}$. 
The closed-loop dynamics are identified by a single-hidden-layer ReLU NN (15 neurons).
Applying the UIS method with two slopes $\alpha_1=0.35$ and $\alpha_2=0.55$ yields a certified invariant set, 
demonstrating the scalability of our framework to 4D nonlinear dynamics with input saturation. 
The certification is verified on the nonlinear dynamics, and the computational times are summarized in Table~\ref{tab:times}.
\end{example}

\begin{table}[t]
\centering
\caption{Computation and verification times (in seconds) for different examples.}
\label{tab:times}
\begin{tabular}{lccc}
\toprule
\textbf{Example} & \textbf{UIS Computation} & \textbf{Verification} & \textbf{Total} \\
\midrule
Inverted Pendulum~\ref{example:IP}   & 34   & 12   & 46 \\
Barrier Certificate~\ref{example:comparison example} & 57   & 23   & 80 \\
Cart-Pole~\ref{example:cartpole}           & 1349 & 117  & 1466 \\
\bottomrule
\end{tabular}
\end{table}
\section{Conclusion}
We presented an approximation--verification framework for synthesizing CBFs in nonlinear systems with input constraints. The approach combines convex synthesis on PWA surrogates using a two-slope Leaky ReLU $\alpha(\cdot)$, enlargement via a Union of Invariant Sets (UIS), and facet-wise verification with local uncertainty updates. 
Simulations on pendulum and cart-pole systems show that the method certifies larger invariant sets than linear-$\alpha$ designs while remaining tractable. Future work will address scalability to higher-dimensional systems and integration with predictive control.

\begin{figure}
    \centering
    \includegraphics[width=0.87\linewidth]{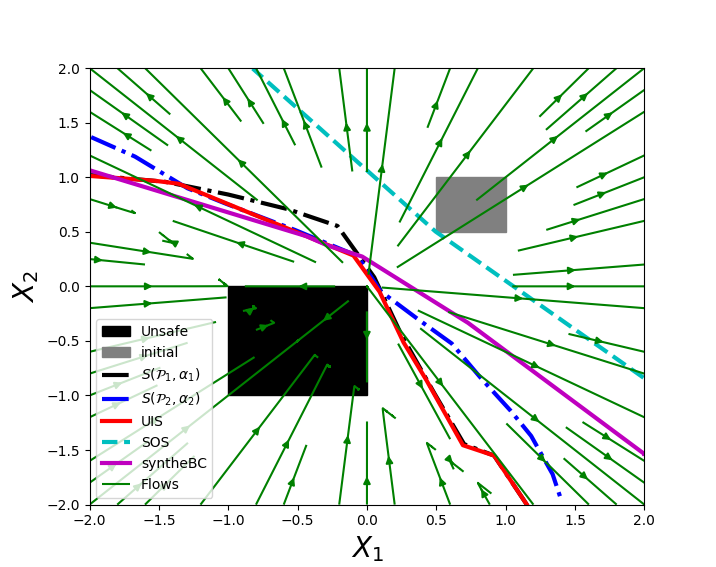}
    \caption{
    The final invariant set is obtained by UIS with two different $\alpha$ values (see Sec.~\ref{sec:UIS}) for Example~\ref{example:comparison example}. 
The certified region is compared against baselines from sum-of-squares (SOS) programming and SyntheBC, showing that UIS achieves a larger invariant set.}
    \label{fig:path-following}
\end{figure}
\bibliographystyle{ieeetr}
\bibliography{acc2025}
\end{document}